\newcommand{\fa}{\forall}
\newcommand{\ex}{\exists}
\newcommand{\bac}{\backslash}
\newcommand{\seb} {\subseteq}
\newcommand{\sepq}{\supseteq} 
\newcommand{\matN}{\mathbb N}
\newcommand{\matR}{\mathbb R}
\newcommand{\cG}{{\cal G}}
\newcommand{\cP}{{\cal P}}
\newcommand{\cH}{{\cal H}}
\newcommand{\cJ}{{\cal J}}
\newcommand{\cK}{{\cal K}}
\newcommand{\cL}{{\cal L}}
\newcommand{\cW}{{\cal W}}
\begin{document}

\mainmatter

\title{On the Complexity of the Decisive Problem\\in Simple, Regular and Weighted Games}

\author{Andreas Polym\'eris\inst{1} \and Fabi\'an Riquelme\inst{2}\thanks{This author was supported by grant BecasChile of the ``National Commission for Scientific and Technological Research of Chile'' (CONICYT) of the Chilean Government.}}
\institute{Dept. de Ingenier\'ia Inform\'atica y Cs. de la Computaci\'on, Universidad de Concepci\'on, Concepci\'on, Chile.\\
	\and Dept. de Llenguatges i Sistemes Inform\`atics, Universitat Polit\`ecnica de Catalunya, Barcelona, Spain. \\
	\email{apolymer@udec.cl, farisori@lsi.upc.edu}
}
\date{}

\maketitle


\begin{abstract}
We study the computational complexity of an important property of simple, regular and weighted games, which is decisiveness. 
We show that this concept can naturally be represented in the context of hypergraph theory, 
and that decisiveness can be decided for simple games in quasi-polynomial time, and for regular and weighted games in polynomial time.
The strongness condition poses the main difficulties, while properness reduces the complexity of the problem, especially if it is amplified by regularity.
On the other hand, regularity also allows to specify the problem instances much more economically, implying a reconsideration of the 
corresponding complexity measure that, as we prove, has important structural as well as algorithmic consequences.
\keywords{Simple Game, Regular Game, Weighted Game, Hypergraph, Strongness, Decisiveness, Complexity}
\end{abstract}

\section{Introduction}

Cooperative game theory is a branch of game theory where a subset $X$ of players of an entirety $A$, forming a {\em coalition}, can achieve a common benefit. 
Unlike non-cooperative games, where each player is competing individually against the others, here the focus is cooperation.
We shall concentrate on {\em simple games} that specify which subsets $X$ of the collective $A$ are {\em winning coalitions};
i.e. can impose their common will, if they have one.
We will examine important properties that some simple games may have.
Given all its {\em minimal winning coalitions}, we will ask if the given game is {\em proper}, {\em strong}, {\em decisive}, {\em regular}, {\em linear}, {\em weighted}, {\em homogeneous} or {\em majority}.
So we will pose decision problems concerning simple games,
and we will try to characterize the computational complexity of these decision problems.

In \cite{FMOS11} the authors conjectured that given all the minimal winning coalitions of a simple game, 
the problem of deciding if this game is decisive ---or {\em self-dual}--- is {\em coNP}-complete. 
However, we will show that this {\em decisive problem} can be solved in quasi-polynomial time; that is, 
it can be solved by an algorithm which runs in a time bounded by a sub-exponential function, which is not necessarily polynomial.
In \cite{PS85} the authors proved that verifying that a game is weighted ---or {\em threshold} \cite{Mur71} or a {\em trade robust game} \cite{TZ92}--- is in $P$. 
But in \cite{FMOS11} the authors conjectured that deciding if a weighted game is decisive, is {\em coNP}-complete. 
We will prove that this conjecture is most probably wrong too (unless $NP=P$), because this problem is also in $P$. 
Actually, the decisive problem is polynomial even for linear games \cite{TZ99} ---also called {\em complete games} \cite{Car84}, {\em ordered games} \cite{Sud96}, {\em swap-robust games} \cite{TZ99} or {\em 2-monotone Boolean functions} \cite{Win62}--- which are simple games whose players can be ordered forming a regular game \cite{Mur71} ---also known as {\em directed game} \cite{KS95}---; a class that includes weighted games.

Finally ---going definitely beyond of what we already have stated in \cite{RP11}---
we shall drop the assumption that the given games are always specified by their minimal winning coalitions; and prove that regular games
can be given in a more economic way so that if we now measure the computational time according to this smaller input size, 
the mentioned algorithm of \cite{PS85} ceases to be polynomial. The corresponding decisive problem
turns out be at least as complex as for simple games.

In the following section, we introduce {\em hypergraphs} to review the main basic definitions.
In Section 3, we define the decisive problem and show that it is quasi-polynomial, using the well known result of \cite{FK96}.
In Section 4, we define regularity, weightedness and homogeneity, showing that one can recognize any of these properties in polynomial time;
and moreover, if one of them holds, also decide decisiveness in time polynomial in the input size. 
In Section 5, we concentrate on regular games, proving that for such games the minimal winning coalitions yield, in fact, an inflated measure of the input size.
In Section 6, we recapitulate considering {\em more than proper} games.

\section{Hypergraphs and simple games}

\begin{definition}
Given a finite {\em ground set} $A$ ---and its power set $\cP(A)$---
a {\em hypergraph} defined over $A$ is a family $\cH\seb\cP(A)$ of {\em hyperedges} $X\seb A$.

We may represent hypergraphs $\cH$ as incidence matrices, such that their rows represent the incidence vectors $x:A\rightarrow\{0,1\}$ of the hyperedges $X\in\cH$. So, given $a\in A$, $x(a)=1$ iff $a\in X$. The {\em size} of $\cH$, i.e., the amount of bits needed in order to write down the family of incidence vectors that characterize the hypergraph, is $|A|\cdot|\cH|\in\matN$.
\end{definition}

\begin{definition} \label{def2}
Given a hypergraph $\cH$, let be:
    \begin{itemize}
    \item $\neg(\cH):=\{A\bac X; X\in\cH\}$
    the family of complements of elements of $\cH$.
    \item $\mu(\cH):=\{X\in\cH;\fa Z\in\cH,Z\not\subset X\}$
    the family of {\em irredundant} ---or {\em minimal}--- elements of $\cH$.
    \item $\nu(\cH):=\{Z\seb A;\ex X\in\cH, X\seb Z\}$
    the family of subsets of $A$ that {\em respond} to $\cH$.
    \item $\tau(\cH):=\{Z\seb A;\fa X\in\cH, X\cap Z\not=\emptyset\}$
    the family of subsets of $A$ that are {\em transversal} to $\cH$.
    \item $\lambda(\cH):=\mu(\tau(\cH))$
    the family of {\em irredundant elements that are transversal} to $\cH$.
    \end{itemize}
\end{definition}

\begin{example}
The following six tables represent a hypergraph $\cH$ over $A:=\{3,2,1\}$, as well as the five families one obtains applying the defined operators.
In each table the first row represents the ground set $A$, and the rest of the rows present the incidence matrix that determines the denoted hypergraph.

\begin{flushleft}
\begin{minipage}[b]{14ex}
\begin{tabular}{|l|l|} \hline
$A$		& $321$\\ \hline
$\cH$		& $011$\\ 
		& $100$\\ 
		& $111$\\
		& $~$  \\
		& $~$  \\ \hline
\end{tabular}
\end{minipage}
\begin{minipage}[b]{15ex}
\begin{tabular}{|l|l|} \hline
$A$		& $321$\\ \hline
$\neg(\cH)$	& $000$\\ 
		& $011$\\ 
		& $100$\\
		& $~$  \\
		& $~$  \\ \hline
\end{tabular}
\end{minipage}
\hspace{1ex}
\begin{minipage}[b]{15ex}
\begin{tabular}{|l|l|} \hline
$A$		& $321$\\ \hline
$\mu(\cH)$	& $011$\\ 
		& $100$\\
		& $~$  \\
		& $~$  \\
		& $~$  \\ \hline
\end{tabular}
\end{minipage}
\hspace{1ex}
\begin{minipage}[b]{15ex}
\begin{tabular}{|l|l|} \hline
$A$		& $321$\\ \hline
$\nu(\cH)$	& $011$\\ 
		& $100$\\ 
		& $101$\\ 
		& $110$\\ 
		& $111$\\ \hline
\end{tabular}
\end{minipage}
\hspace{1ex}
\begin{minipage}[b]{15ex}
\begin{tabular}{|l|l|} \hline
$A$		& $321$\\ \hline
$\tau(\cH)$	& $101$\\ 
		& $110$\\ 
		& $111$\\
		& $~$  \\
		& $~$  \\ \hline
\end{tabular}
\end{minipage}
\hspace{1ex}
\begin{minipage}[b]{15ex}
\begin{tabular}{|l|l|} \hline
$A$		& $321$\\ \hline
$\lambda(\cH)$	& $101$\\ 
		& $110$\\
		& $~$  \\
		& $~$  \\
		& $~$  \\ \hline
\end{tabular}
\end{minipage}
\end{flushleft}
\end{example}

Let us recall some basic equations \cite{Pol08}:

\begin{lemma}\label{lema1}
Given $\cH\seb\cP(A)$, \, $\cP(A)\bac\nu(\cH)=\neg(\tau(\cH))$, \, $\tau(\nu(\cH))=\tau(\cH)$ \, and \, $\tau(\tau(\cH))=\nu(\cH)$.
\end{lemma}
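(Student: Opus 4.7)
The plan is to establish the three identities separately by unfolding the definitions in Definition \ref{def2}; no deep tool is required, and the third identity will be reduced to the first. Throughout, I will use that $Z\in\neg(\cF)$ is by definition the same as $A\bac Z\in\cF$ for any family $\cF\seb\cP(A)$.

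For the first identity $\cP(A)\bac\nu(\cH)=\neg(\tau(\cH))$, I would observe that $Z\not\in\nu(\cH)$ says precisely that no hyperedge $X\in\cH$ is contained in $Z$, and this is equivalent to every $X\in\cH$ meeting $A\bac Z$, that is, $A\bac Z\in\tau(\cH)$, equivalently $Z\in\neg(\tau(\cH))$. Both containments follow from this single chain of equivalences.

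For the second identity $\tau(\nu(\cH))=\tau(\cH)$, I would note that $\cH\seb\nu(\cH)$ combined with the evident antimonotonicity of $\tau$ (more hyperedges impose more transversality conditions) gives $\tau(\nu(\cH))\seb\tau(\cH)$. For the reverse inclusion I would take $Z\in\tau(\cH)$ and any $Y\in\nu(\cH)$; by definition of $\nu$ some $X\in\cH$ satisfies $X\seb Y$, and then $Z\cap X\neq\emptyset$ forces $Z\cap Y\neq\emptyset$, so $Z\in\tau(\nu(\cH))$.

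For the third identity $\tau(\tau(\cH))=\nu(\cH)$, the direction $\nu(\cH)\seb\tau(\tau(\cH))$ is immediate: if $X\seb Z$ for some $X\in\cH$ and $Y\in\tau(\cH)$, then $Y$ meets $X$ and hence meets $Z$. For the reverse inclusion I would argue by contrapositive via the first identity: if $Z\not\in\nu(\cH)$, then $A\bac Z\in\tau(\cH)$, yet $Z\cap(A\bac Z)=\emptyset$, so $Z$ cannot be transversal to $\tau(\cH)$. The only subtle point is keeping the complementation book-keeping straight, since $\neg$ acts element-wise on a family; beyond that I do not anticipate any obstacle.
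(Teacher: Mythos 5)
Your proposal is correct and follows essentially the same route as the paper: the first identity by unfolding definitions, the second via $\cH\seb\nu(\cH)$ plus antitonicity of $\tau$ in one direction and a direct argument in the other, and the third reduced to the first identity through complementation (the paper does this as a single set-comprehension rewrite, you as a contrapositive for the reverse inclusion, which is the same idea). No gaps.
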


\begin{proof}
The first equation follows directly from the definitions.
 The monotony of $\nu$ ---if $\cH\seb\cH'$, then $\nu(\cH)\seb\nu(\cH')$--- together with the antitony of $\tau$ ---if $\cH\seb\cH'$, then $\tau(\cH)\sepq\tau(\cH')$--- immediately imply $\tau(\nu(\cH))\seb\tau(\cH)$. Conversely,
 if $Z\in\tau(\cH)$, then $\fa X\in\cH$, $X\not\seb A\bac Z$; so $\fa Y\in\nu(\cH)$, $Y\not\seb A\bac Z$.  To prove the third equation, note that
 $\tau(\tau(\cH))=\{Y\seb A; \fa X\seb A, Y\seb A\bac X$ implies $X\not\in\tau(\cH)\}$; so the first equation implies 
 $\tau(\tau(\cH))=\{Y\seb A; \fa X\seb A, Y\seb A\bac X$ implies $A\bac X\in\nu(\cH)\}=\nu(\cH)$.
\end{proof}

Shortly after mathematical hypergraph theory was consolidated \cite{Ber70}, it was already used to analyze cooperative games \cite{Pol80}, 
although corresponding complexity considerations are quite new.

\begin{definition}
A {\em simple game} is a pair $\Gamma:=(A,\cW)$, where $A$ is a finite set of {\em players} and $\cW\seb\cP(A)$ is a family of {\em winning coalitions}, 
such that for all $X\in\cW$, $X\seb Z$ implies $Z\in\cW$. 
So $\cW$ is a hypergraph over $A$, with $\nu(\cW)=\cW$.
In other words, the Boolean function $f:\cP(A)\to\{0,1\}$ determined by the hypergraph $\cW$ ---such that $\cW=\{X\seb A; f(A)=1\}$--- is monotone.
Therefore we also say that simple games are monotone.

Then $\mu(\cW)$, the {\em irredundant kernel} of $\Gamma$, is the family of {\em minimal winning coalitions}; $\tau(\cW)$ is the {\em blocker} of $\cW$; and
$\cL:=\cP(A)\bac\cW=\{X\seb A; f(A)=0\}=\neg(\tau(\cW))$ is the set of {\em losing coalitions}. We do not demand $\emptyset\in\cL$, so $\cL$ can be void.
The simple game $(A,\tau(\cW))$ is called the {\em dual game} of $(A,\cW)$.
\end{definition}

\section{Decisiveness in simple games}

\begin{definition}\label{def4}
A simple game $(A,\cW)$ is:
    \begin{itemize}
    \item {\em proper}, if for all $S\seb A$, $S\in\cW \Rightarrow A\bac S\notin\cW$, i.e. $\cW\seb\tau(\cW)$,
    \item {\em strong}, if for all $S\seb A$, $S\notin\cW \Rightarrow A\bac S\in\cW$, i.e. $\cW\sepq\tau(\cW)$, and
    \item {\em decisive}, if it is both strong and proper, i.e. $\cW=\tau(\cW)$.
    \end{itemize}
\end{definition}

In \cite{TZ99} we read: {\em Properness rules out the possibility of disjoint winning coalitions, 
while strongness rules out the possibility of two losing coalitions whose union is $A$ $\ldots$ 
Some authors who view simple games as models of voting systems have little interest in simple games that are not proper. 
Their argument is that disjoint winning coalitions can allow contradictory decisions to be made by the voting body. $\ldots$ 
A less vigorous argument is sometimes raised against games that are not strong, and thus, the argument goes, 
leave some issues unresolved. Ramamurthy (1990) refers to ``the paralysis that may result from allowing a losing coalition to obstruct a decision.''}
So, ideally, games should be decisive; and this is why this property interests us so much.

\begin{definition} \cite{Pol08} A pair of hypergraphs $(\cH,\cK)$ over the same ground set is:
    \begin{itemize}
    \item {\em coherent}, if $\nu(\cH)\seb\tau(\cK)$, 
    \item {\em complete}, if $\nu(\cH)\sepq\tau(\cK)$, and
    \item {\em dual}, if it is both coherent and complete, i.e. $\nu(\cH)=\tau(\cK)$.
    \end{itemize}
\end{definition}

\begin{lemma}\label{lema2}
Given a simple game $\Gamma:=(A,\cW)$, then:
    \begin{itemize}
    \item $\Gamma$ is proper, iff $(\cW,\cW)$ is coherent.
    \item $\Gamma$ is strong, iff $(\cW,\cW)$ is complete.
    \item $\Gamma$ is decisive, iff $(\cW,\cW)$ is dual.
    \end{itemize}
\end{lemma}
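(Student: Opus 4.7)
The plan is to reduce each of the three equivalences to the defining equations of Definition~\ref{def4} by invoking the single fact, already recorded in the definition of a simple game, that the hypergraph $\cW$ of winning coalitions satisfies $\nu(\cW)=\cW$. This monotonicity identity is the only nontrivial ingredient; once it is in place, each of the three claims becomes a rewriting of the corresponding item of Definition~\ref{def4}.

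More concretely, I would proceed item by item. For the first bullet, $(\cW,\cW)$ is coherent means $\nu(\cW)\subseteq\tau(\cW)$, and substituting $\nu(\cW)=\cW$ this becomes $\cW\subseteq\tau(\cW)$, which is precisely the condition defining properness. For the second bullet, $(\cW,\cW)$ is complete means $\nu(\cW)\supseteq\tau(\cW)$, which by the same substitution is $\cW\supseteq\tau(\cW)$, the definition of strongness. For the third bullet, duality of $(\cW,\cW)$ is the conjunction of the previous two, hence equivalent to the conjunction of properness and strongness, which is decisiveness.

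The only point that requires a brief verification is the identity $\nu(\cW)=\cW$. One direction is trivial since every $X\in\cW$ responds to $\cW$ (take the witness to be $X$ itself). The other direction uses the monotonicity built into the definition of a simple game: if $Z$ responds to $\cW$, then some $X\in\cW$ satisfies $X\subseteq Z$, and the closure-under-supersets property of $\cW$ forces $Z\in\cW$. After that observation, no further obstacle remains: the lemma follows by three lines of symbol manipulation, and there is no real hard step.
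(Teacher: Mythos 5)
Your proof is correct and follows essentially the same route as the paper: substitute the identity $\nu(\cW)=\cW$ (which the paper records in the definition of a simple game) into the definitions of coherent, complete and dual, and match the results against Definition~\ref{def4}. Your extra verification of $\nu(\cW)=\cW$ is a harmless addition the paper omits because it is built into the definition.
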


\begin{proof}
Just note that:\\
$(\cW,\cW)$ is coherent iff $\nu(\cW)\seb\tau(\cW)$ iff $\cW\seb\tau(\cW)$, i.e. iff $\Gamma$ is proper.\\
$(\cW,\cW)$ is complete iff $\nu(\cW)\sepq\tau(\cW)$ iff $\cW\sepq\tau(\cW)$, i.e. iff $\Gamma$ is strong.\\
$(\cW,\cW)$ is dual iff $\nu(\cW)=\tau(\cW)$ iff $\cW=\tau(\cW)$, i.e. iff $\Gamma$ is decisive.
\end{proof}

Note that for any hypergraph $\cH$, $\nu(\mu(\cH))=\nu(\cH)$ and $\tau(\mu(\cH))=\tau(\cH)$.
So let $(\cH,\cK)$ be a pair of hypergraphs and $\Gamma:=(A,\cW)$ a simple game, then:

\begin{flushleft}
\begin{tabular}{l@{}l|l@{}l}
  $(\cH,\cK)$ is dual     & \, iff $(\mu(\cH),\mu(\cK))$ is dual     & $\Gamma:=(A,\cW)$ is decisive & \, iff $(\mu(\cW),\mu(\cW))$ is dual     \\
  $(\cH,\cK)$ is coherent & \, iff $(\mu(\cH),\mu(\cK))$ is coherent & $\Gamma:=(A,\cW)$ is proper   & \, iff $(\mu(\cW),\mu(\cW))$ is coherent \\
  $(\cH,\cK)$ is complete & \, iff $(\mu(\cH),\mu(\cK))$ is complete & $\Gamma:=(A,\cW)$ is strong   & \, iff $(\mu(\cW),\mu(\cW))$ is complete \\
\end{tabular}
\end{flushleft}

\begin{example}\label{ex2}
In projective geometry, the {\em Fano Plane} is the smallest projective plane. It was introduced in simple game theory by \cite{Ric56} to define a subclass of simple games called {\em finite projective games}. Since then, it has been very much studied, due to it has special properties that make it a likely counterexample for different results ---for instance, without going into details, it is the only non-partition game with the same number of minimal winning coalitions and players \cite{Sud96}--- as well as a case in which some properties turn out to be the same ---for instance, its reactive bargaining set coincides with its kernel \cite{GM97}.

It can be represented by a hypergraph $\cH$ over $A:=\{7,\ldots,1\}$, with seven evenly distributed hyperedges represented by the following incidence matrix. It is easy to check that $\mu(\cH)=\cH$ and prove that $(\cH,\cH)$ is coherent. It is more difficult to prove the completeness of this pair ---see the next Corollary \ref{cor1}--- but in fact it is dual. So the game $\Gamma:=(A,\nu(\cH))$ is proper and strong; i.e. decisive.

\begin{flushleft}
\begin{tabular}{|l|l|} \hline
$A$	& $7654321$\\ \hline
$\cH$	& $0000111$\\ 
	& $0011010$\\
	& $0101100$\\ 
	& $0110001$\\ 
	& $1001001$\\
	& $1010100$\\
	& $1100010$\\ \hline
\end{tabular}
\end{flushleft}
\end{example}

Given a simple game $\Gamma:=(A,\cW)$, since for any hypergraph $\cH$ we have $\mu(\nu(\cH))=\mu(\cH)$, 
we may assume that only $\mu(\cW)$ is extensively specified; that the rest of $\cW$ is only implicitly defined by $\mu(\cW)$. Because given any $Z\seb A$, 
deciding if $Z\in\cW$ can be done efficiently: Simply visit all $X\in\cH$, every time deciding $X\seb Z$; 
something that can be done in at most $|A|$ units of computation time.
So the amount of information needed to specify $\Gamma$ is given by the size $|A|\cdot|\mu(\cW)|$ of $\mu(\cW)$. 
This explains the following easy but important result:

\begin{corollary}\label{cor1}
For simple games $\Gamma:=(A,\cW)$,
\begin{enumerate}
    \item decide properness is polynomial.
    \item decide strongness is {\em coNP}-complete.
    \item decide decisiveness is quasi-polynomial; 
    i.e. there exists a decision algorithm whose logarithm of the running time is polynomial in the logarithm of the input size.
\end{enumerate}
\end{corollary}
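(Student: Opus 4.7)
The plan is to use Lemma~\ref{lema2} to recast each of the three properties as a condition on the hypergraph $\mu(\cW)$, which is what we are given as input, and then attack the resulting hypergraph problems with tailored tools.

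For part~(1), properness says $\cW\seb\tau(\cW)$. Since $\cW=\nu(\mu(\cW))$, I would argue that this is equivalent to demanding that any two elements of $\mu(\cW)$ intersect: the forward direction is immediate, and conversely, if $Z\in\cW$ then $Z$ contains some $X\in\mu(\cW)$, and pairwise intersection of $\mu(\cW)$ forces $Z\cap Y\sepq X\cap Y\neq\emptyset$ for every $Y\in\mu(\cW)$, so $Z\in\tau(\cW)$. Checking all pairs costs $O(|A|\cdot|\mu(\cW)|^2)$, which is polynomial in the input size $|A|\cdot|\mu(\cW)|$.

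For part~(2), membership in coNP is easy: a witness against strongness is a set $Z\seb A$ such that neither $Z$ nor $A\bac Z$ contains any element of $\mu(\cW)$, and this is verifiable in time $O(|A|\cdot|\mu(\cW)|)$. For coNP-hardness, my plan is to reduce from the \emph{hypergraph 2-colorability} (Property B) problem, known to be NP-complete even for hypergraphs with only non-empty hyperedges. Given such $\cH$ over $A$, put $\Gamma:=(A,\nu(\cH))$, so that $\mu(\nu(\cH))=\mu(\cH)$ and the input size is preserved. The key observation is that $Z\notin\nu(\cH)$ means every hyperedge of $\cH$ has an element outside $Z$, while $A\bac Z\notin\nu(\cH)$ means every hyperedge has an element in $Z$; together, these two conditions say exactly that $\{Z,A\bac Z\}$ is a proper 2-coloring of $\cH$. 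Hence non-strongness witnesses for $\Gamma$ correspond bijectively to 2-colorings of $\cH$.

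For part~(3), Lemma~\ref{lema2} identifies decisiveness with the condition that the pair $(\mu(\cW),\mu(\cW))$ is dual, equivalently $\mu(\cW)=\lambda(\cW)$; that is, one asks whether the given hypergraph is its own family of minimal transversals. This is the well-studied \emph{monotone hypergraph duality} problem (in its self-dual instance), which the Fredman--Khachiyan algorithm~\cite{FK96} solves in time $N^{o(\log N)}$ on inputs of size $N$. Taking logarithms gives $o((\log N)^2)$, polynomial in $\log N$, exactly as the corollary demands.

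The main obstacle, in my view, is finding the right reduction in part~(2): parts~(1) and~(3) dissolve once one invokes the appropriate tool (a pairwise-intersection test and the Fredman--Khachiyan theorem, respectively), whereas the coNP-hardness of strongness requires isolating an NP-complete hypergraph problem whose witnesses are precisely the non-strongness witnesses of the associated game, which is what makes the translation through $\nu(\cH)$ attractive.
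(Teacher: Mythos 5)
Your proposal is correct, and parts (1) and (3) coincide with the paper's own argument: properness reduces to pairwise intersection of the elements of $\mu(\cW)$, and decisiveness is exactly the self-dual instance of monotone hypergraph duality, handled by the Fredman--Khachiyan algorithm~\cite{FK96}. Where you genuinely diverge is part (2). The paper establishes {\em coNP}-hardness indirectly: it invokes the known {\em coNP}-completeness of deciding completeness of a general pair of hypergraphs $(\cH,\cK)$~\cite{PCPO02} together with the polynomial reduction of that asymmetric problem to the symmetric case $(\cH,\cH)$ from~\cite{FK96}. You instead give a direct reduction from hypergraph 2-colorability (Set Splitting): the witnesses $Z$ with $Z\notin\nu(\cH)$ and $A\bac Z\notin\nu(\cH)$ are precisely the proper 2-colorings of $\cH$, and this equivalence survives the passage from $\cH$ to $\mu(\cH)$ because an edge is split whenever a minimal edge it contains is split, so the input handed to the strongness oracle is legitimately an irredundant kernel. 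Your route is more self-contained and elementary---it needs only one classical {\em NP}-complete problem rather than the machinery of hypergraph-pair completeness and its symmetrization---while the paper's route has the advantage of placing the statement squarely inside the duality-theory framework that the rest of the paper builds on (the same pair-of-hypergraphs language reappears in Lemma~\ref{lem6} and Theorem~\ref{teo3}). Both are valid; only minor hygiene points remain in yours, namely noting that $\mu(\cH)$ is computable in polynomial time and that degenerate edges (empty or singleton) do not break the correspondence---they make the hypergraph non-2-colorable and the game strong simultaneously, so the equivalence is preserved.
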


\begin{proof}
The three statements rely on Lemma \ref{lema2}. Statement (1) follows because, as $\nu$ is a monotone operator, $(\cH,\cH)$ is coherent 
iff $\fa X,Z\in\cH$, $X\cap Z\not=\emptyset$, a condition that can be verified in polynomial time. Statement (2) follows because it has been proven many times \cite{PCPO02} that
the completeness of a pair of hypergraphs $(\cH,\cK)$, poses a decision problem that is {\em coNP}-complete. Moreover, this not necessarily symmetric problem can polynomially be reduced to the symmetric one~\cite{FK96}. Statement (3) follows from a deep result, proving that duality of a pair of hypergraphs $(\cH,\cK)$ can be decided in quasi-polynomial time, since there is a sub-exponential algorithm~\cite{FK96} whose logarithms of the running times can be bounded by a polynom of the logarithms of the sizes of the pairs. 
\end{proof}

Note that therefore the decisive problem is most probably not {\em NP}-hard, unless any {\em NP}-complete problem can be solved in quasi-polynomial time.
But note also that the mentioned quasi-polynomial algorithm does not allow us to generate $\lambda(\cH):=\mu(\tau(\cH))$ in sub-exponential time, 
since $|\lambda(\cH)|$ can not be quasi-polynomially bounded by $|\mu(\cH)|$. To prove this statement, consider the following example:

\begin{example}\label{ex3}
Let be $m\in\matN$, $I:=\{m,\ldots,1\}$, $n:=2m$, $A:=\{n,\ldots,1\}$ and $\cH:=\{\{2i,2i-1\}; i\in I\}$. 
Then, of course, $\lambda(\cH)=\{Y\seb A; \fa i\in I, 2i\in Y$ or (exclusive) $2i-1\in Y\}$.  So $|\cH|=m$ but $|\lambda(\cH)|=2^m$.
\end{example}

\section{Regular and weighted games}

\begin{definition}
Let $\leq$ denote a linear order on $A$ which orders the players by {\em increasing power}. 
Given $X\seb A$, an {\em increasing-shift} on $X$, specified by a pair $(a,b)\in A\bac X\times X$ such that $a>b$,
is an operation which returns $Z:=X\bac\{b\}\cup\{a\}$.

A simple game $(A,\cW)$ is {\em regular} with respect to the given linear order \cite{PS85,Mak02} if for all $X\in\cW$, every increasing-shift returns an element of $\cW$;
i.e. if replacing a member of a winning coalition $X\in\cW$ by a {\em more powerful} one, always yields a winning coalition.
\end{definition}

It is clear that a simple game $(A,\cW)$ is regular iff $\fa X\in\mu(\cW)$ every increasing-shift returns an element of $\cW$.
So, to decide the regularity of $\cW$, one can restrict the attention to $\cH:=\mu(\cW)$,
resulting that regularity can be decided in time polynomial in $|\cH|$.
In fact, this recognition problem can be solved in linear time \cite{Mak02}.

Regular games are ---in a sense that we shall reconsider in Section 5--- {\em very tractable}; i.e.
there exist algorithms \cite{PS94} that given $\cH:=|\mu(\cW)|$, 
generate $\lambda(\cH)$ in linear time; and at the same time prove that $|\lambda(\cH)|\leq |A|\cdot|\cH|+1$. 

\begin{definition}
A simple game $(A,\cW)$ is {\em linear} \cite{TZ99} ---or {\em $2$-monotonic} \cite{PS94}---  
if there exists a linear re-ordering of $A$, for which $\cW$ becomes regular.
\end{definition}

The linearity of $\cW$ can also be decided in time polynomial in $|\mu(\cW)|$;
since if one identifies $X\in\mu(\cW)$, $a\in A\bac X$ and $b\in X$ with $a>b$ and $\{a\}\cup X\bac\{b\}\not\in\nu(\cH)$, then, 
to have a chance to obtain a regular re-ordering, the order $a>b$ has to be reversed; definitively.
And such definitive re-ordering can be executed at most $n\cdot(n-1)/2$ times, where $n:=|A|$ \cite{Mak02}.\\

The Fano plane of Example \ref{ex2} is clearly not regular, and by its symmetry, not linear.

\begin{definition}
A {\em weighted game} is a simple game $\Gamma:=(A,\cW)$
that can be specified by a {\em threshold criterion} $(q,p)$ ---also known as {\em weighted representation}--- 
where $p:A\to\matR_+$ is a {\em weight function} and $q\in\matR_+$ a {\em quota} such that for all $Z\seb A$, $Z\in\cW$ iff $p(Z)\geq q$; 
where $p(Z):=\Sigma\{p(a); a\in Z\}$.
$\Gamma$ is {\em homogeneous} if it has a threshold criterion such that $\fa X\in\mu(\cW)$, $p(X)=q$.
\end{definition}

It is well known that each weighted game has a canonical weighted representation.

Weighted games were defined in 1944 by \cite{vNM44} and deeply studied in 1956 by \cite{Isb56}, in the context of simple game theory.
Since then they have been studied in many different contexts and under different names, like {\em linearly separated truth function} \cite{McNau61} 
---to contact and to rectify nets---, {\em linearly separable switching function} or {\em threshold Boolean functions} \cite{Hu65} 
---to separate circuits in switching circuit theory and analyse the threshold synthesis problem---, {\em trade robustness} \cite{TZ92} 
---for voting theory and trade exchanges--- or {\em threshold hypergraphs} \cite{Gol80,RRST85} ---to synchronize parallel processes. 
Homogeneous games were also first defined in \cite{vNM44}, and they are one of the most studied subclasses of weighted games \cite{Sud96}.

Weighted games clearly are simple (monotone) games, and one can always linearly re-order the elements of $A$ such that $p:A\to\matR$ becomes monotone; 
i.e. for all $a,b\in A$ with $a<b$, $p(a)\leq p(b)$. 
And then clearly the game $\cW$ becomes regular. So all weighted games are linear. The converse, however, does not necessarily hold \cite{TZ99,Mur71}.

According to the well known duality of linear inequality systems:

\begin{lemma} \label{lema3}
$\cH$ yields a weighted game $\nu(\cH)$ iff for all $u,u':\cH\to\matN$ with $u\cdot 1=u'\cdot 1$, $u\cdot H\not\leq u'\cdot H'$; where
$1$ is the $\cH$-vector with all entries equal to $1$, and $H,H'$ are the $\cH\times A$-incidence-matrices that represent $\cH$ and $\neg(\cH)$ respectively.
\end{lemma}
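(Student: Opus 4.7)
The plan is to derive Lemma~\ref{lema3} via linear programming duality, applying Farkas' Lemma to the system that encodes a weighted representation of $\nu(\cH)$.

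First, I would translate the existence of a weighted representation $(q,p)$ for $\nu(\cH)$ into the feasibility of the linear system
$$ p\geq 0,\qquad Hp\geq q\,\mathbf{1},\qquad H'p+\mathbf{1}\leq q\,\mathbf{1}, $$
in the unknowns $p\in\matR_+^A$ and $q\in\matR_+$. A standard rescaling replaces the strict separation between winning and losing coalitions by the integer gap of $\mathbf{1}$ on the right-hand side of the second family of constraints. Monotonicity of the Boolean function attached to $p$ then lifts the separation from the restricted families $\cH$ and $\neg(\cH)$ to all winning and losing coalitions of $\nu(\cH)$, so that feasibility of this system is equivalent to weightedness of $\nu(\cH)$.

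Next I would apply Farkas' Lemma to this LP. Attaching non-negative multipliers $u_X$ to each constraint $p(X)-q\geq 0$ and $u'_X$ to each constraint $q-p(A\bac X)\geq 1$, and then summing, the coefficient of each $p(a)$ becomes $(uH)_a-(u'H')_a$, which must be $\leq 0$ (absorbed by the constraint $p(a)\geq 0$); the coefficient of $q$ becomes $\sum_X u'_X-\sum_X u_X$, which must vanish; and the cumulative right-hand side becomes $\sum_X u'_X$, which must be strictly positive. Because the LP data are rational, a common scaling lets us take $u,u'\in\matN^{\cH}$, not both zero. Hence infeasibility of the LP is equivalent to the existence of $u,u':\cH\to\matN$ with $u\cdot\mathbf{1}=u'\cdot\mathbf{1}$ and $uH\leq u'H'$, which is precisely the negation of the condition stated in the lemma.

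The step I expect to require the most care is the equivalence between the feasibility of the LP, whose losing constraints involve only $\neg(\cH)$, and the weightedness of $\nu(\cH)$, whose losing constraints in principle range over all coalitions of $\neg(\tau(\cH))=\cP(A)\bac\nu(\cH)$. I would close this gap by combining the monotonicity of $p$ with the identity $\cP(A)\bac\nu(\cH)=\neg(\tau(\cH))$ from Lemma~\ref{lema1}, propagating the LP separation of $\cH$ from $\neg(\cH)$ to a full separation of every winning coalition from every losing one. Once that reduction is in hand, the lemma is exactly the dual Farkas certificate.
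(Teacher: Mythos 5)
Your Farkas computation is exactly the ``well known duality of linear inequality systems'' that the paper invokes (it gives no further proof), and that half of your argument is sound: after clearing denominators, infeasibility of the system $p\geq 0$, $Hp\geq q\mathbf{1}$, $H'p\leq (q-1)\mathbf{1}$ is equivalent to the existence of $u,u':\cH\to\matN$, not both zero, with $u\cdot 1=u'\cdot 1$ and $uH\leq u'H'$. The genuine gap is precisely the step you flag, and it cannot be closed the way you propose. Monotonicity of $p$ does propagate $p(Z)\geq q$ from $\cH$ to all of $\nu(\cH)$, but in the other direction it only forces $p(Y)\leq q-1$ for those $Y$ contained in some $A\bac X$ with $X\in\cH$. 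The losing coalitions of $\nu(\cH)$ are the subsets of the elements of $\neg(\lambda(\cH))$, not of $\neg(\cH)$, and in general a maximal losing coalition is contained in no $A\bac X$ with $X\in\cH$. Concretely, for $A=\{0,1,2,3,4\}$ and $\cH=\{\{0,1,2\},\{0,3,4\}\}$ your LP is feasible (take $p(0)=10$, $p(a)=1$ otherwise, $q=3$), and the stated certificate condition holds vacuously because $(uH)_0=u\cdot 1>0=(u'H')_0$; yet $\nu(\cH)$ is not weighted, since $\mathbf{1}_{\{0,1,2\}}+\mathbf{1}_{\{0,3,4\}}=\mathbf{1}_{\{0,1,3\}}+\mathbf{1}_{\{0,2,4\}}$ while the two coalitions on the right are losing. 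The converse reduction fails as well for weighted games that are not proper: for $\cH=\{\{1\},\{2\}\}$ over $A=\{1,2\}$ the game $\nu(\cH)$ is weighted, but $A\bac\{2\}=\{1\}$ is winning, so the constraint $p(A\bac\{2\})\leq q-1$ clashes with $p(\{1\})\geq q$ and the LP is infeasible (equivalently, the pair $u,u'$ putting unit weight on $\{1\}$ and on $\{2\}$ respectively violates the stated condition).

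So the equivalence between your LP and weightedness of $\nu(\cH)$ --- and with it the lemma as literally stated --- needs an additional hypothesis: one must know that $\neg(\cH)$ consists of losing coalitions (this is properness) and that every maximal losing coalition, i.e.\ every element of $\neg(\lambda(\cH))$, is contained in some element of $\neg(\cH)$. The system the paper actually solves in Algorithm~\ref{alg1} places the losing constraints on $\cJ:=\neg(\lambda(\cH))$, not on $\neg(\cH)$, and the lemma is applied in Example~\ref{ex4} to a decisive game, where $\lambda(\cH)=\cH$ and the two families coincide. If you want a correct general statement, replace $H'$ by the incidence matrix of $\neg(\lambda(\cH))$ (with $u'$ ranging over that family) and run your Farkas argument on that system; as written, the propagation step via Lemma~\ref{lema1} does not exist.
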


\begin{example} \label{ex4}
It is easy to verify that the simple game $(A,\cW)$ with $\cW:=\nu(\cH)$, given by the following hypergraph $\cH=\mu(\cW)$, is regular.
But according to Lemma \ref{lema3}, this game is not weighted: Just choose $u:\cH\to\matN$ as indicated in the following table, and $u':=u$.
In the next Section we will also prove that this game is decisive.

\begin{flushleft}
\begin{tabular}{|l|l|l|}   \hline
$A$			& $987654321$ & u\\\hline
$\cH$			& $011011011$ & 1\\
			& $011011101$ & 0\\
			& $011011110$ & 0\\
			& $011100100$ & 1\\
			& $011101000$ & 0\\
			& $011110000$ & 0\\
			& $100011100$ & 1\\
			& $100100011$ & 1\\
			& $100100101$ & 0\\
			& $100100110$ & 0\\
			& $100101000$ & 0\\
			& $100110000$ & 0\\
			& $101000000$ & 0\\
			& $110000000$ & 0\\\hline
$\Sigma(z\cdot u)$	& $222222222$ & 4\\\hline
\end{tabular}
\end{flushleft}
\end{example}
$~$\\

The next result was proven in \cite{PS85}, but not for homogeneous games.

\begin{theorem} \label{teo1}
Deciding whether a given simple game $\Gamma:=(A,\cW)$ is weighted (homogeneous) or not, can be done in time polynomial in the size of $\cH:=\mu(\cW)$.
\end{theorem}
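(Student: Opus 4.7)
The strategy is to express weightedness (resp.\ homogeneity) as the feasibility of a linear program of polynomial size in $|A|\cdot|\cH|$, and then invoke a polynomial-time LP algorithm; this follows the scheme of \cite{PS85} for the weighted case and extends it to the homogeneous one.

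Concretely, I would introduce real variables $p_a\geq 0$ for $a\in A$ together with a threshold $q\in\matR$, and write down the system
\[
p(X) > q \quad (X\in\cH), \qquad p(A\setminus X) < q \quad (X\in\cH).
\]
For the homogeneous variant, replace the first block of strict inequalities by the equalities $p(X)=q$ for $X\in\cH$. In either case the LP has at most $2|\cH|$ non-trivial constraints over $|A|+1$ variables, so its encoding length is polynomial in $|A|\cdot|\cH|$, the size of the input specifying $\Gamma$.

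The key step is to argue that feasibility of this LP is equivalent to $(A,\nu(\cH))$ being weighted (resp.\ homogeneous). This is precisely the content of Lemma \ref{lema3} read through LP duality: by a Farkas-type alternative, the primal LP is infeasible exactly when there exist non-negative integer multipliers $u,u':\cH\to\matN$ with $u\cdot 1 = u'\cdot 1$ and $u\cdot H\leq u'\cdot H'$, which are the witnesses explicitly ruled out in Lemma \ref{lema3}. Once correctness is settled, applying any polynomial-time LP feasibility algorithm (e.g.\ Khachiyan's ellipsoid method or an interior-point method) decides the question in time polynomial in $|A|\cdot|\cH|$; if feasible, the solution also yields an explicit threshold criterion.

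The main obstacle is the correctness reduction itself. A priori, weightedness requires that \emph{every} losing coalition $Y\in\cL$ satisfy $p(Y) < q$, and as discussed after Corollary \ref{cor1} (see Example \ref{ex3}) the family of maximal losing coalitions, i.e.\ $\neg(\lambda(\cH))$, can be exponential in $|\cH|$. Lemma \ref{lema3} avoids this blow-up by showing that a bounded separation indexed by $\cH$ and $\neg(\cH)$ is already enough. Once this compression is in hand, the homogeneous case is handled by the same template with equalities replacing the first block, and the rest of the argument is mechanical.
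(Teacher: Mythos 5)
There is a genuine gap: the linear program you set up does not characterize weightedness. Your constraint set is indexed by $\cH$ and $\neg(\cH)$, i.e.\ you only force the \emph{complements of minimal winning coalitions} below the quota, whereas weightedness requires that \emph{all} losing coalitions --- equivalently, since $p\geq 0$, all maximal losers $\neg(\lambda(\cH))$ --- lie below it, and that family is in general neither contained in nor dominated by $\neg(\cH)$. In fact your system is feasible precisely when some $p\geq 0$ gives every $X\in\cH$ more than half of the total weight $p(A)$, a far weaker condition than weightedness. Counterexample: $A:=\{5,\ldots,1\}$, $\cH:=\{\{3,2,1\},\{5,4,1\}\}$. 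This game is proper (player $1$ is a vetoer) but not weighted: the coalitions $\{4,2,1\}$ and $\{5,3,1\}$ are losing, yet $p(\{3,2,1\})+p(\{5,4,1\})=p(\{4,2,1\})+p(\{5,3,1\})$ for every additive $p$, so a threshold criterion would force $2q\leq p(\{3,2,1\})+p(\{5,4,1\})=p(\{4,2,1\})+p(\{5,3,1\})<2q$. Nevertheless your LP is satisfied by $p(1):=1$, $p(a):=0$ otherwise, and $q:=1/2$, so your algorithm would accept a non-weighted game. The Farkas certificate here necessarily involves the losers $\{4,2,1\},\{5,3,1\}$, which are not complements of members of $\cH$; so Lemma~\ref{lema3}, whose right-hand family is $\neg(\cH)$, cannot supply the compression you invoke (the paper only uses it to \emph{refute} weightedness in Example~\ref{ex4}, never to certify it).

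The paper's proof hinges on a step your argument omits entirely. Since weighted implies linear, Algorithm~\ref{alg1} first decides linearity (polynomial, indeed linear time \cite{Mak02}) and rejects if it fails; it then reorders $A$ so that $\cH$ becomes regular and applies the Peled--Simeone results \cite{PS85,PS94}, by which for regular games $\lambda(\cH)$ can be generated in polynomial time and satisfies $|\lambda(\cH)|\leq|A|\cdot|\cH|+1$. Only then does it solve a linear program whose ``losing'' block is indexed by the now polynomially many maximal losers $\cJ:=\neg(\lambda(\cH))$, with strict inequalities replaced by $p(Y)\leq q-1$ after rescaling (a detail you should also fix, since LP feasibility algorithms handle closed systems). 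That regularity-based generation of $\cJ$ is exactly how the exponential blow-up you correctly identify via Example~\ref{ex3} is avoided; without it, no system indexed only by $\cH$ and $\neg(\cH)$ suffices.
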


\begin{proof}
The following procedure is efficient for both kind of games, since each Step can be accomplished in polynomial time: 
The first four were already commented above, and Steps 5 to 8 only demand the solution of a system of linear equalities and inequalities, 
which can also be found in polynomial time \cite{Kha79}. Note that in Step 8 the original restriction demands that for each loser $Y\in\cJ$, $p(Y)<q$; 
but this can equivalently be replaced by $\fa Y\in\cJ$, $p(Y)\leq q-1$, because multiplying a threshold criterion by a positive constant, returns a threshold criterion.
\end{proof}

\begin{algorithm}
\begin{algorithmic}[1]
\REQUIRE The irredundant kernel $\cH:=\mu(\cW)$ of a simple game $\Gamma:=(A,\cW)$.
\ENSURE  If $\Gamma$ is weighted (homogeneous), return ``Yes''; otherwise return ``No''.
\STATE Decide if $\cH$ is linear;
\STATE \textbf{If} this is not the case, \textbf{then} return ``No'';
\STATE Determine a linear ordering of $A$ which makes $\cH$ regular;
\STATE Generate $\cJ:=\neg(\lambda(\cH))$;
\STATE Decide the existence of a threshold criterion $(q,p)$, such that:\\
$\fa a\in A$, \,\,\,\,\, $p(a)\geq 0$;\\
$\fa X\in\cH$, \, $p(X)\geq q$ ~~($p(X)=q$);\\
$\fa Y\in\cJ$, \,  $p(Y)\leq q-1$;
\STATE \textbf{If} this is not the case, \textbf{then} return ``No'';
\STATE \textbf{Return} ``Yes'' --- $\Gamma$ is weighted (homogeneous). 
\end{algorithmic}
\caption{IsWeighted}\label{alg1}
\end{algorithm}

Next we will prove a similar result, but now regarding decisive games:

\begin{definition}
A weighted game $\Gamma:=(A,\cW)$ is a {\em sub-majority game}, if it is strong; and it is a {\em majority game}, if it is decisive.
\end{definition}

In \cite{FMOS11} the authors conjecture that verifying majority games is {\em coNP}-complete.
This conjecture is most probably false (unless $P$=$NP$), because:

\begin{theorem} \label{teo2}
Deciding whether a given simple game $\Gamma:=(A,\cW)$ is majority (sub-majority) or not, can be done in time polynomial in the size of $\cH:=\mu(\cW)$.
\end{theorem}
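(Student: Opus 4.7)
The plan is to reduce the problem to two polynomial-time subroutines: first decide weightedness, then, assuming the game is weighted, decide strongness (for sub-majority) and additionally properness (for majority). The first subroutine is Algorithm~\ref{alg1} from Theorem~\ref{teo1}, which runs in time polynomial in $|\cH|$; if it returns ``No'', then $\Gamma$ is neither majority nor sub-majority, and we return ``No''. Otherwise $\Gamma$ is weighted, hence linear, and after the linear reordering of $A$ found in Step~3 of Algorithm~\ref{alg1} the game becomes regular. Properness, for the majority case, is disposed of immediately by Corollary~\ref{cor1}(1): it amounts to checking that every pair $X,Z\in\cH$ satisfies $X\cap Z\not=\emptyset$.

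The heart of the argument is the strongness check. By Lemma~\ref{lema2}, $\Gamma$ is strong iff $\nu(\cH)\sepq\tau(\cH)$, and using that $\tau(\cH)=\nu(\mu(\tau(\cH)))=\nu(\lambda(\cH))$ (since $\tau(\cH)$ is upward closed and $\mu$ preserves $\nu$), this is equivalent to: every $Y\in\lambda(\cH)$ contains some $X\in\cH$. So it suffices to generate $\lambda(\cH)$ and, for each of its elements, scan $\cH$ for an included hyperedge. Since, after the reordering above, $\cH$ is regular, I would invoke the algorithm of~\cite{PS94} mentioned in Section~4 to compute $\lambda(\cH)$ in linear time; this is accompanied by the crucial bound $|\lambda(\cH)|\leq|A|\cdot|\cH|+1$. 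Consequently the double scan to verify the strongness condition runs in time polynomial in $|A|\cdot|\cH|$, completing the algorithm.

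The main obstacle, and the reason the result is subtler than it may appear, is precisely the coNP-completeness of strongness for general simple games (Corollary~\ref{cor1}(2)): we cannot afford to treat $\tau(\cH)$ through an implicit specification, because testing $\nu(\cH)\sepq\tau(\cH)$ would inherit that hardness. What rescues the weighted case is the combination given by~\cite{PS94}, namely that for regular games $\lambda(\cH)$ is both polynomially bounded in size and polynomially constructible. Thus the polynomial-time check for weightedness does double duty: it not only decides the ``weighted'' part of the predicate but also certifies that the game is regular, which in turn unlocks the efficient generation of $\lambda(\cH)$ needed to decide the strongness part. Everything else in the algorithm consists of straightforward polynomial operations on $\cH$ and $\lambda(\cH)$.
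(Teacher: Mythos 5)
Your proposal is correct and follows essentially the same route as the paper's Algorithm~\ref{alg2}: both hinge on the fact that linearity (guaranteed here by weightedness) lets one generate $\lambda(\cH)$ in polynomial time via the regular-game dualization of \cite{PS94,PS85}, after which strongness/decisiveness is a direct combinatorial comparison against $\cH$ and weightedness is a linear program. The only differences are cosmetic --- you run the weightedness test first and split decisiveness into properness (Corollary~\ref{cor1}(1)) plus strongness, whereas the paper tests $\lambda(\cH)=\cH$ directly before the LP.
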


\begin{proof}
Analogously to Theorem \ref{teo1}, the following Algorithm \ref{alg2} is efficient to decide both kind of games.
First note that by Definition \ref{def4}, $\Gamma$ is decisive iff $\tau(\cW)=\cW$, so applying the operator $\mu$ on the expression we obtain $\lambda(\cW)=\mu(\cW)$, by Definition \ref{def2}. In the same way, $\Gamma$ is strong iff $\tau(\cW)\seb\cW$ iff $\lambda(\cW)\seb\mu(\cW)$.
Moreover, it is easy to see that $\lambda(\mu(\cW))=\lambda(\cW)$.
Using these ideas, we test decisiveness or strongness on Steps 4-5, through the following questions: is $\lambda(\cH)=\cH$? or, respectively, is $\lambda(\cH)\seb\nu(\cH)$? which are equivalent to the two conditions given above. Finally, the procedure to decide whether the game is also weighted or homogeneous is the same as the given by Algorithm~\ref{alg1}.
\end{proof}

\begin{algorithm}
\begin{algorithmic}[1]
\REQUIRE The irredundant kernel $\cH:=\mu(\cW)$ of a simple game $\Gamma:=(A,\cW)$.
\ENSURE  If $\Gamma$ is majority (sub-majority), return ``Yes''; otherwise return ``No''.
\STATE Decide if $\cH$ is linear;
\STATE \textbf{If} this is not the case, \textbf{then} return ``No'';
\STATE Generate $\cK:=\lambda(\cH)$;
\STATE Decide if $\cK=\cH$ ~~($\cK\seb\nu(\cH)$);
\STATE \textbf{If} this is not the case, \textbf{then} return ``No'';
\STATE Decide the existence of a threshold criterion $(q,p)$, such that:\\
$\fa a\in A$, \,\,\,\,\,\, $p(a)\geq 0$;\\
$\fa X\in\cH$, \,\, $p(X)\geq q$;\\
$\fa Y\in\neg(\cK)$, $p(Y)\leq q-1$;
\STATE \textbf{If} this is not the case, \textbf{then} return ``No'';
\STATE \textbf{Return} ``Yes'' --- $\Gamma$ is majority (sub-majority).
\caption{IsMajority}\label{alg2}
\end{algorithmic}
\end{algorithm}

It is known that up to eight players, all the simple games which are regular and decisive are also majority. 
This implies that Step 6 of Algorithm \ref{alg2} is unnecessary in these cases, because the answer would always be ``Yes''. 
However, from nine players onwards, this is not always true. It is known that for $|A| = 9$ there are $319124$ regular and decisive games \cite{KS95}, but only $175428$ majority games \cite{MTB70}. 

As far as we know, explicit examples of regular games which are decisive but not weighted, as our Example \ref{ex4}, have not been indicated until now. 
However, another regular game that in fact meets these characteristics is given in \cite{Sud96b}.

\section{Economic specification of regular games}

The considerations that in the last section led from simple games to weighted ones, revealed the central importance of regular games.
Mainly because to prove Theorem \ref{teo1} and Theorem \ref{teo2}, we had to rely on the main result of~\cite{PS85}, which,
given the minimal winning coalitions $\mu(\cW)$ of a regular game $\Gamma:=(A,\cW)$, allows to generate the family of the irredundant transversal $\lambda(\cW)$
in polynomial time, in function of the input size. But there is, as we will see, a much more economic way to determine regular games;
one that will demand a reconsideration of the recalled polynomiallity results.\\

Since in what follows we restrict the attention to regular games, we may assume that $A=\{n,n-1,\ldots,1\}$.

\begin{definition}
Given $X\seb A$, a {\em left-shift} on $X$ is either an increasing-shift ---see Definition 6--- specified by a pair $(a,b)\in A\bac X\times X$, such that $a=b+1$; or, if $1\in A\bac X$, a replacement of $X$ by $X\cup\{1\}$.

Given $X,Z\seb A$ we say that $Z$ is a {\em successor} of $X$, if a left-shift performed on $X$ yields $Z$;
i.e. if there exist $a\in A\bac X$ and $b\in X\cup\{0\}$, such that $a=b+1$ and $Z:=X\bac\{b\}\cup\{a\}$.
So we can conceive a directed {\em successor graph} on the {\em set of nodes} $\cP(A)$ such that, given $X,Z\seb A$, $(X,Z)$ is an {\em arrow} of the graph, if $Z$ is a successor of $X$.

We write $X\seb'Z$ if, starting with $X$, there exists a sequence of left-shifts that produces $Z$.
It is quite clear that a sequence of left-shifts starting with $X$ can not re-produce $X$ ---unless the sequence is empty---
so the successor graph is acyclic. The relation $\seb'$ is the reflexive-and-transitive closure of that successor graph. So it is an order relation,
that we shall call {\em shift order}.
\end{definition}

Given $X,Z\seb A$, $X\seb'Z$ iff for all $a\in A$, $|\{b\in X; b\geq a\}|\leq |\{b\in Z; b\geq a\}|$;
because if $Z$ is a successor of $X$, then the stated inequalities evidently hold; and, the other way round, if the inequalities hold, then starting from $X$, one can always perform a sequence of left-shifts that preserve the inequalities to finally produce $Z$.
So this shift relation can be decided in time polynomial in $n$.
And it is also clear that, if $X\seb Z$, then $X\seb'Z$; i.e. that $\seb'$ is a monotone variation of $\seb$. 
And that for $\seb'$ ---as for $\seb$--- complementation is antitone; i.e. if $X\seb'Z$, then $A\bac Z\seb'A\bac X$.

The operators $\nu',\tau',\mu',\lambda'$ are defined like the already familiar $\nu,\tau,\mu,\lambda$ ---see Definition \ref{def2}--- but with the relation $\seb'$ instead of $\seb$. For instance, given $\cH\seb\cP(A)$, $\tau'(\cH):=\{Z\seb A;\fa X\in\cH, X\not\seb'A\bac Z\}$, and 
$\mu'(\cW):=\{X\in\cW; \fa Z\in\cW, Z\not\subset' X\}$ yields the {\em shift-minimal winning coalitions} of the given game.

Note that, given a hypergraph $\cW$ over $A$, $(A,\cW)$ is a regular game iff $\nu'(\cW)=\cW$. This is so because we always have
$\nu'(\cW)\sepq \nu(\cW)\sepq \cW$. So $\nu'(\cW)= \nu(\cW)= \cW$ is equivalent to $\nu'(\cW)\seb \cW$. Hence regular games are $\seb'$-monotone games,
as simple games are $\seb$-monotone games.

Some of these notions and results are known since \cite{Ost87}.

\begin{lemma}
 A simple game $\Gamma:=(A,\cW)$ is regular iff $\cW=\nu'(\cH')$, where $\cH':=\mu'(\cH)$.\\
 If $\Gamma$ is regular, then $\tau(\cW)=\tau'(\cW)$, so its dual game $(A,\tau(\cW))$ is regular too. Therefore\\
 a regular $\Gamma$ is proper iff $\nu'(\cW)\seb\tau'(\cW)$; is strong iff $\nu'(\cW)\sepq\tau'(\cW)$; and is decisive iff $\nu'(\cW)=\tau'(\cW)$.
\end{lemma}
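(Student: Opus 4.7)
The plan is to address the three assertions in sequence, each reducing to properties of the shift order $\seb'$ that parallel the familiar properties of $\seb$ recalled in Lemma \ref{lema1}.

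First, for the characterization of regularity as $\cW=\nu'(\mu'(\cW))$, I would observe that by transitivity of $\seb'$ the family $\nu'(\cdot)$ is always $\seb'$-upward closed, so the equality $\cW=\nu'(\mu'(\cW))$ immediately forces $\nu'(\cW)=\cW$, which is the definition of regularity. Conversely, if $\Gamma$ is regular, then $\mu'(\cW)\seb\cW$ together with monotony of $\nu'$ gives $\nu'(\mu'(\cW))\seb\nu'(\cW)=\cW$, while the finiteness of $A$ guarantees that every $X\in\cW$ dominates some $\seb'$-minimal $Z\in\mu'(\cW)$, so $X\in\nu'(\mu'(\cW))$.

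Next, I would establish the identity $\tau(\cW)=\tau'(\cW)$ for regular $\Gamma$. The inclusion $\tau'(\cW)\seb\tau(\cW)$ holds for any hypergraph, because $\seb$ implies $\seb'$: if $X\seb A\bac Z$ then $X\seb' A\bac Z$, so the non-shift condition is stronger than non-inclusion. For the reverse inclusion, suppose $Z\in\tau(\cW)$ and, aiming for a contradiction, that some $X\in\cW$ satisfies $X\seb' A\bac Z$; since $\cW$ is shift-monotone ($\nu'(\cW)=\cW$), this yields $A\bac Z\in\cW$, but then $(A\bac Z)\cap Z=\emptyset$ contradicts $Z\in\tau(\cW)$. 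Regularity of the dual game $(A,\tau(\cW))$ then follows by showing that $\tau'(\cK)$ is always $\seb'$-upward closed for any $\cK$: if $Y\seb'Y'$, antitony of complementation gives $A\bac Y'\seb' A\bac Y$, so any $X$ with $X\seb' A\bac Y'$ would also satisfy $X\seb' A\bac Y$. Combining, $\nu'(\tau(\cW))=\nu'(\tau'(\cW))=\tau'(\cW)=\tau(\cW)$, and so the dual is regular.

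Finally, the equivalences for properness, strongness and decisiveness in the regular case follow just by rewriting Definition~\ref{def4}: substituting the two identities $\cW=\nu'(\cW)$ (regularity) and $\tau(\cW)=\tau'(\cW)$ (second part of the lemma) into $\cW\seb\tau(\cW)$, $\cW\sepq\tau(\cW)$, $\cW=\tau(\cW)$ yields the three claimed shift-order formulations. I expect no serious obstacle; the one step that is not merely formal is the contradiction argument used for $\tau(\cW)\seb\tau'(\cW)$, which is where the regularity hypothesis is actually consumed, and it must be carried out with the correct use of antitony of complementation under $\seb'$.
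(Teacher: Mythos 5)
Your proposal is correct and follows essentially the same route as the paper: the first part rests on $\cW\seb\nu'(\cW)=\nu'(\cH')$ plus the observation that regularity is exactly $\nu'(\cW)\seb\cW$, and the third part is the same substitution of $\cW=\nu'(\cW)$ and $\tau(\cW)=\tau'(\cW)$ into Definition~\ref{def4}. The only difference is presentational: where the paper derives $\tau(\cW)=\tau'(\cW)$ and the regularity of the dual by invoking the $\seb'$-variant of Lemma~\ref{lema1}, you re-derive the needed instances directly via the double inclusion and the $\seb'$-upward closure of $\tau'(\cK)$, which is the same content made explicit.
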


\begin{proof}
 The above mentioned monotone variation, together with the definition of $\cH'$, imply $\nu(\cW)\seb\nu'(\cW)=\nu'(\cH')$. 
 So, to prove the first statement, it only remains to point out that regularity of $(A,\cW)$ is equivalent to $\nu'(\cW)\seb\nu(\cW)$.
 The second statement follows from the variation of Lemma \ref{lema1} that one obtains replacing $\seb$ by $\seb'$. 
 Note that the given proof of Lemma \ref{lema1} holds for any order relation on $\cP(A)$ for which complementation is antitone.
\end{proof}

Hence, to implicitly specify a regular game $(A,\cW)$, it suffices to explicitly specify $\cH':=\mu'(\cW)\seb\cH:=\mu(\cW)$:
Given any $Z\seb A$, one can decide $Z\in\cW$, simply visiting all $X\in\cH'$, every time deciding $X\seb' Z$.

\begin{example}\label{ex5}
Consider the regular game of Example \ref{ex4}, specified over there by a hypergraph $\cH$.\\
Its economic specification $\cH'$ is the following; because all other $X\in\cH$ are such that $X\in\nu'(\cH')$.

\begin{flushleft}
\begin{tabular}{|l|l|l|}  \hline
$A$			& $987654321$ & u\\\hline
$\cH'$			& $011011011$ & 1\\
			& $011100100$ & 1\\
			& $100011100$ & 1\\
			& $100100011$ & 1\\
			& $100101000$ & 0\\
			& $101000000$ & 0\\\hline
\end{tabular}
\end{flushleft}
\end{example}
$~$\\

Until now we have assumed that not only simple games $\Gamma:=(A,\cW)$ but also, in particular, regular games, will be characterized by $\mu(\cW)$;
and that therefore the time needed to decide questions concerning simple games,
should be measured using the {\em simple-measure} $\kappa(\Gamma):=|A|\cdot|\mu(\cW)|$; the size of $\mu(\cW)$.
But if we limit the attention to the universe of regular games $\Gamma:=(A,\cW)$, since we now know that they can also be characterized by $\mu'(\cW)$,
then it seems to be reasonable that the deciding times should be measured using the {\em regular-measure}
$\kappa'(\Gamma):=|A|\cdot|\mu'(\cW)|$; the size of $\mu'(\cW)$.
One reason for this, is that the characterization in question turns out to be efficient,
since for all $Z\seb A$, $Z\in\cW$ can be decided in polynomial time in function of $\kappa'(\Gamma)$.
Another reason is, that
for all regular games $\Gamma:=(A,\cW)$, since $\mu'(\cW)\seb\mu(\cW)$, also $\kappa'(\Gamma)\leq\kappa(\Gamma)$.

But in our opinion this adaptation to the universe of regular games is moreover mandatory, because $\kappa$ would yield an {\em inflated measure} for regular games, 
compared with $\kappa'$,
since there exist sequences $\{\Gamma_m; m\in\matN\}$ of regular games, 
such that $\kappa(\Gamma_m)/\kappa'(\Gamma_m)$ grows exponentially in function of $m\in\matN$.
For instance, if we define $\Gamma_m$ such that $A:=\{2m,\ldots,1\}$, $X:=\{a\in A; m\geq a\}$,
and $\cW:=\nu'(\{X\})$, then $|\mu'(\cW)|=1$ and $|\mu(W)|={2m\choose m}$.

This opinion contradicts the one of the authors of~\cite{PS85}. They proved that there exists a polynom $p:\matN\to\matN$, such that for all $n\in\matN$,
$\hat{\kappa}(n)/\hat{\kappa}'(n)\leq p(n)$; where $\hat{\kappa}(n):=\max\{\kappa((A,\cW)); |A|=n\}$ and $\hat{\kappa}'(n):=\max\{\kappa'((A,\cW)); |A|=n\}$.
They even proved that the same would still hold, if $\kappa'$ would be replaced by the size of any codification which is able to distinguish different regular games.
And based on this result, they conclude that the simple-measure is a {\em not-inflated measure}.

This view, however, forces us to ``scan'' the universe of regular games partitioning it in ``slices'' with constant $|A|$. But one could
as well ``scan'' it, partitioning it in ``slices'' of constant $\kappa'(\cW)$. Then, redefining $\hat{\kappa},\hat{\kappa}'$ accordingly,
for any $m\in\matN$ one would have $\hat{\kappa}'(2m)=2m$. But since the above defined regular game
$\Gamma_m$ is in the ``slice'' $2m$, $\hat{\kappa}(2m)\geq 2m\cdot{2m\choose m}$. 
So $\hat{\kappa}(2m)/\hat{\kappa}'(2m)$ can not be bounded by a polynom in $m$.

For all these reasons we adopt the regular-measure $\kappa'$; with all its important consequences:\\

\begin{definition}
Let $A:=\{n,\ldots,1\}$ and $A':=\{2n,2n-1,\ldots,2,1\}$.\\
Let $\cP(A)$ be ordered according to $\seb$ and $\cP(A')$ be ordered according to $\seb'$.\\
Let $T:\cP(A)\to\cP(A')$ be such that given $X\seb A$ and $a\in A$, $2a\in T(X)$ iff $a\in X$ and $2a-1\in T(X)$ iff $a\not\in X$.\\
So $T$ is a monomorphism, because it is an injective function such that for all $X,Z\seb A$, $X\seb Z$ iff $T(X)\seb' T(Z)$
---since $T(Z)$ is a successor of $T(X)$ iff $Z$ is a set-theoretical successor, i.e. a superset, of $X$.\\
For any $a\in A$, let $Z^a:=\{2a\}\cup\{\{2b-1; b\in A, b\geq a\}\seb A'$ ---see Example \ref{ex6}--- and let $\cG':=\{Z^a; a\in A\}\seb\cP(A')$.
\end{definition}

\begin{lemma} \label{lema-reduction}
The image $T(\cP(A)):=\{T(X);X\seb A\}$ of $\cP(A)$  is closed with respect to complementation.\\
Restricting the codomain to $T(\cP(A))$, $T$ becomes bijective; i.e. an isomorphism.\\
Moreover $\nu'(\cG')\cup T(\cP(A))=\tau'(\cG')$.
\end{lemma}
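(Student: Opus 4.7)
Parts (i) and (ii) follow directly from the definition of $T$. For (i), $2a\in A'\setminus T(X)$ iff $a\notin X$ iff $a\in A\setminus X$ iff $2a\in T(A\setminus X)$, and analogously for $2a-1$; hence $A'\setminus T(X)=T(A\setminus X)\in T(\cP(A))$. For (ii), $T$ is injective (any $X\neq Z$ differ on some $a$, whence $T(X)$ and $T(Z)$ differ on both $2a$ and $2a-1$), and $T(\cP(A))$ is its image by definition; combined with the monomorphism property recalled in the preceding definition, this yields the order-isomorphism.

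For the identity $\nu'(\cG')\cup T(\cP(A))=\tau'(\cG')$ I would prove three inclusions, working throughout with the threshold characterisation $X\seb' Y$ iff $|\{x\in X:x\geq c\}|\leq|\{x\in Y:x\geq c\}|$ for every $c$. The inclusion $T(\cP(A))\seb\tau'(\cG')$: given $X\seb A$ and $a\in A$, by (i) $T(A\setminus X)=A'\setminus T(X)$ contains exactly one element from each pair $\{2b-1,2b\}$ with $b\geq a$, so it has $n-a+1$ elements in $[2a-1,2n]$, strictly less than the $n-a+2$ elements $Z^a$ has in the same interval; thus $Z^a\not\seb' T(A\setminus X)$. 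The inclusion $\nu'(\cG')\seb\tau'(\cG')$: if instead $Z^a\seb' Y$ and $Z^b\seb' A'\setminus Y$ with (WLOG) $a\leq b$, take $c=2b-1$ when $a<b$; then $Z^a$ contributes $n-b+1$ elements $\geq c$ and $Z^b$ contributes $n-b+2$, together exceeding the $2n-2b+2$ elements of $[2b-1,2n]$ available to the partition $\{Y,A'\setminus Y\}$, a contradiction; the case $a=b$ fails analogously at $c=2a$.

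The main obstacle is the remaining inclusion $\tau'(\cG')\seb\nu'(\cG')\cup T(\cP(A))$. Fix $Y\in\tau'(\cG')$ with $Y\notin T(\cP(A))$; I want to exhibit $a\in A$ with $Z^a\seb' Y$. Partition $A$ into $S_0,S_1,S_2$ according to whether the pair $\{2a-1,2a\}$ contributes $0$, $1$, or $2$ elements to $Y$; by hypothesis $S_0\cup S_2\neq\emptyset$. If $S_2=\emptyset$ and $S_0\neq\emptyset$, a direct threshold-wise check shows $Z^1\seb' A'\setminus Y$ (using that $|A'\setminus Y|\geq n+1$ and that every pair contributes at least one element to $A'\setminus Y$ when $S_2=\emptyset$), contradicting $Y\in\tau'(\cG')$; hence $S_2\neq\emptyset$. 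Put $a_0:=\max S_2$. If some $b_1>a_0$ belonged to $S_0$, then pair $b_1$ would contribute both its elements to $A'\setminus Y$ and every pair $b'>b_1$ would contribute at least one (as $b'>a_0$ rules out $S_2$), yielding $Z^{b_1}\seb' A'\setminus Y$ by the same threshold pattern used in the second inclusion---again a contradiction. So every $b>a_0$ lies in $S_1$. With this structural information, the inequalities $|\{x\in Z^{a_0}:x\geq i\}|\leq|\{x\in Y:x\geq i\}|$ hold with equality at $i=2a_0-1$, $i=2a_0$ and $i=2b-1$ for $b>a_0$, with slack at $i=2b$ for $b>a_0$, and trivially by monotonicity for $i<2a_0-1$. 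Hence $Z^{a_0}\seb' Y$ and $Y\in\nu'(\cG')$. The delicate part throughout is the bookkeeping of pair-by-pair contributions and the careful choice of threshold witnessing each failed shift comparison.
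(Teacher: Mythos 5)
Your proof is correct and follows essentially the same route as the paper's: the threshold characterisation of $\seb'$, pair-by-pair bookkeeping, and --- for the hard inclusion --- locating the maximal pair that is doubled in $Y$ (the paper phrases this contrapositively, via the maximal pair that is doubled or empty in an arbitrary $Z'\notin T(\cP(A))$, but the combinatorial content is identical). If anything your write-up is the more careful of the two: the paper disposes of $\nu'(\cG')\seb\tau'(\cG')$ with the single threshold $2a-1$, which does not work verbatim for every witness $Z^c\seb' X'$, whereas your separate treatment of the two parts of the union, with thresholds $2b-1$ and $2a$ for the $\nu'(\cG')$ case, covers all cases.
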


\begin{proof}
The $T$-image of $\cP(A)$ covers (only) 
the $X'\in\cP(A')$ that for all $a\in A$, either satisfy $2a\in X'$ or (exclusive) $2a-1\in X'$. So it is closed with respect to complementation.
Let us now prove the last statement. We have $\nu'(\cG')\cup T(\cP(A))\seb\tau'(\cG')$, because if $X'\in\nu'(\cG')\cup T(\cP(A))$ and $a\in A$,
then $X'\seb' A'\bac Z^a$ is impossible, since $|\{b'\in X'; b'\geq 2a-1\}|> n-a=|\{b'\in A'\bac Z^a; b'\geq 2a-1\}|$. 
And we have $\nu'(\cG')\cup T(\cP(A))\sepq\tau'(\cG')$, because this inclusion is equivalent to $\cP(A')\bac T(\cP(A))\seb\nu'(\cG')\cup\nu'(\neg(\cG'))$. And,
given $Z'\in\cP(A')\bac T(\cP(A))$, let $a\in A$ be maximal such that either $\{2a,2a-1\}\seb Z'$ or $\{2a,2a-1\}\cap Z'=\emptyset$.
Then if the first holds, clearly $Z^a\seb' Z'$; and if the second one holds, $A'\bac Z^a\seb' Z'$.
\end{proof}

\begin{lemma} \label{lem6}
  Given $\cH,\cK\seb\cP(A)$, $\cH':=\{T(X);X\in\cH\}$ and $\cK':=\{T(Y);Y\in\cK\}$.\\
 Then $\nu(\cH)\seb\tau(\cK)$ iff $\nu'(\cH'\cup\cG')\seb\tau'(\cK'\cup\cG')$, 
 so the pair $(\cH,\cK)$ is coherent iff $(\cH'\cup\cG',\cK'\cup\cG')$ is coherent;\\
 and $\nu(\cH)\sepq\tau(\cK)$ iff $\nu'(\cH'\cup\cG')\sepq\tau'(\cK'\cup\cG')$, 
 so the pair $(\cH,\cK)$ is complete iff $(\cH'\cup\cG',\cK'\cup\cG')$ is complete.
\end{lemma}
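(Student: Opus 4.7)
The plan is to leverage the isomorphism $T$ together with Lemma~\ref{lema-reduction} as the main structural tool. Throughout I use the obvious identities $\nu'(\cH' \cup \cG') = \nu'(\cH') \cup \nu'(\cG')$ and $\tau'(\cK' \cup \cG') = \tau'(\cK') \cap \tau'(\cG')$, the antitony of complementation under $\seb'$, and the fact that $T$ commutes with complement on $T(\cP(A))$, i.e.\ $T(A\bac X) = A'\bac T(X)$. Lemma~\ref{lema-reduction} then supplies two essential inclusions: (a) $T(\cP(A)) \seb \tau'(\cG')$, so every $T$-image is transversal to all of $\cG'$; and (b) $\nu'(\cG') \seb \tau'(\cG')$, so $\cG'$ is self-coherent in the $\seb'$-sense.

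For the coherence equivalence, the $(\Leftarrow)$ direction is by contraposition: if $X \in \cH$ and $Y \in \cK$ have $X \cap Y = \emptyset$, then $X \seb A \bac Y$ gives, via $T$-isomorphism, $T(X) \seb' A' \bac T(Y)$, so $T(X) \in \cH' \seb \nu'(\cH' \cup \cG')$ and $T(Y) \in \cK' \seb \cK' \cup \cG'$ witness the failure on $A'$. For $(\Rightarrow)$ I would take any $Z' \in \nu'(\cH' \cup \cG')$ and $W \in \cK' \cup \cG'$ and split into four cases according to whether $Z'$ contains (in $\seb'$) an element of $\cH'$ or of $\cG'$, and whether $W$ is a $T(Y)$ or a $Z^a$. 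The $\cH'$-versus-$\cK'$ case reduces, via antitony and $T$-isomorphism, to the hypothesis $\nu(\cH) \seb \tau(\cK)$; the two mixed $T$-image/$Z^a$ cases reduce to (a); and the pure $\cG'$-versus-$\cG'$ case to (b). The main obstacle is organizing these four cases cleanly, since in each one antitony must be applied at just the right step to flip a $Z'$-related inclusion into a $W$-related one.

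For the completeness equivalence, the $(\Rightarrow)$ direction exploits the decomposition $\tau'(\cG') = \nu'(\cG') \cup T(\cP(A))$ of Lemma~\ref{lema-reduction} to dichotomize any $Z' \in \tau'(\cK' \cup \cG') = \tau'(\cK') \cap \tau'(\cG')$: either $Z' \in \nu'(\cG') \seb \nu'(\cH' \cup \cG')$ and we are done, or $Z' = T(Z)$ for some $Z \seb A$, in which case $Z' \in \tau'(\cK')$ translates via $T$-isomorphism to $Z \in \tau(\cK) \seb \nu(\cH)$, so that $T(Z) \in \nu'(\cH')$. For $(\Leftarrow)$ I would argue contrapositively, lifting a witness $Z \in \tau(\cK) \bac \nu(\cH)$ on $A$ to $T(Z)$ on $A'$. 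The isomorphism gives $T(Z) \in \tau'(\cK') \bac \nu'(\cH')$; (a) gives $T(Z) \in \tau'(\cG')$; and $T(Z) \notin \nu'(\cG')$ follows from a short counting check at position $2a-1$, since $|Z^a \cap [2a-1, 2n]| = n-a+2$ strictly exceeds $|T(Z) \cap [2a-1, 2n]| = n-a+1$ for every $a \in A$.
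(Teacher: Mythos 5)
Your proposal is correct and follows essentially the same route as the paper: translate the core conditions through the monomorphism $T$ (so that $\nu(\cH)\seb\tau(\cK)$ iff $\nu'(\cH')\seb\tau'(\cK')$, and $\tau(\cK)\seb\nu(\cH)$ iff the condition holds on $T(\cP(A))$), and dispose of all the $\cG'$-interactions via Lemma~\ref{lema-reduction} --- your four coherence cases are exactly the paper's three unconditional inclusions $\nu'(\cG')\seb\tau'(\cG')$, $\nu'(\cH')\seb\tau'(\cG')$, $\nu'(\cG')\seb\tau'(\cK')$ plus the one conditional equivalence. Your explicit counting check that $T(\cP(A))\cap\nu'(\cG')=\emptyset$ is a detail the paper's proof glosses over (it is only stated after the lemma, before Corollary~\ref{cor2}) but is indeed needed for the backward direction of the completeness claim, so including it is a small improvement rather than a deviation.
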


\begin{proof} 
Clearly $\nu(\cH)\seb\tau(\cK)$ iff $\fa (X,Y)\in\cH\times\cK$, $X\not\seb A\bac Y$; i.e. iff $\fa (X',Y')\in\cH'\times\cK'$, $X'\not\seb'A'\bac Y'$;
i.e. iff $\nu'(\cH')\seb\tau'(\cK')$. Moreover our Lemma 5 implies $\nu'(\cG')\seb\tau'(\cG')$, $\nu'(\cH')\seb\tau'(\cG')$ and $\nu'(\cG')\seb\tau'(\cK')$.
So we completed the proof of the first statement.
On the other hand, $\nu(\cH)\sepq\tau(\cK)$ iff $\fa Z\in\cP(A)$ with $A\bac Z\not\in\nu(\cK)$, $Z\in\nu(\cH)$; i.e. iff
$\fa Z'\in T(\cP(A))$ with $A'\bac Z'\not\in\nu'(\cK')$, $Z\in\nu(\cH')$. And since $\fa Z'\in\cP(A')\bac T(\cP(A))$ with $A'\bac Z'\not\in\nu'(\cG')$, $Z\in\nu(\cG')$,
we have proved the second statement. 
\end{proof}

From this Lemma, setting $\cH:=\cK:=\cW$, and considering that $|\cG'|\leq |A|$, we immediately get the following:

\begin{theorem} \label{teo3}
 The decisiveness (resp. strongness) problem for simple games $(A,\cW)$ can polynomially be reduced to the
decisiveness (resp. strongness) problem for regular games $(A',\cW')$, specified by their shift-minimal winning coalitions $\mu'(\cW')$.
This implies that the strongness problem for regular games is {\em coNP}-complete.
And that if the decisive problem for regular games $(A',\cW')$ turns out to be polynomial in $|A'|\cdot|\mu'(\cW')|$,
then the decisive problem for simple games $(A,\cW)$ will be polynomial in $|A|\cdot|\mu(\cW)|$.
\end{theorem}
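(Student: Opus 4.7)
The plan is to build, from any simple game $(A,\cW)$ with $\cH:=\mu(\cW)$, a regular game $(A',\cW')$ of polynomial size whose strongness (respectively decisiveness) is equivalent to that of $(A,\cW)$. Both consequences of the theorem then follow: coNP-completeness of regular strongness by composing the reduction with Corollary~\ref{cor1}(2), and the last sentence by composing a hypothetical polynomial-time algorithm for regular decisiveness with the reduction.

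First I would form $\cH_0:=\{T(X);X\in\cH\}\cup\cG'$ over the enlarged ground set $A':=\{2n,\ldots,1\}$. Since $T$ is computed coordinate by coordinate and $|\cG'|\leq|A|$, the hypergraph $\cH_0$ is produced in polynomial time and has size bounded by $|A'|\cdot(|\cH|+|A|)$, which is polynomial in the simple-measure $|A|\cdot|\cH|$. I then set $\cW':=\nu'(\cH_0)$; idempotence of $\nu'$ gives $\nu'(\cW')=\cW'$, so $(A',\cW')$ is a regular game. A further polynomial pass, removing $\seb'$-redundant elements from $\cH_0$ (the shift relation $\seb'$ is decidable in polynomial time), extracts $\mu'(\cW')\seb\cH_0$, whose size remains polynomial in the original input.

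Correctness is an immediate application of Lemma~\ref{lem6} with $\cH=\cK=\mu(\cW)$: the pair $(\mu(\cW),\mu(\cW))$ is coherent (resp.\ complete) iff $(\cH_0,\cH_0)$ is $\seb'$-coherent (resp.\ complete). Combining this with Lemma~\ref{lema2} on the simple side, and on the regular side with the characterisation stated just before Example~\ref{ex5} (a regular game is strong iff $\nu'(\cW)\sepq\tau'(\cW)$ and decisive iff $\nu'(\cW)=\tau'(\cW)$), one concludes that $(A,\cW)$ is strong (resp.\ decisive) iff $(A',\cW')$ is strong (resp.\ decisive). Membership of regular strongness in coNP, needed to upgrade the inherited coNP-hardness to coNP-completeness, is witnessed by any $S\seb A'$ with $S\notin\cW'$ and $A'\bac S\notin\cW'$: both conditions are checked in polynomial time by scanning $\mu'(\cW')$ and testing $\seb'$.

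The only delicate point is making sure that the auxiliary hyperedges $\cG'$ cause $\cW'$ to imitate $\cW$ without introducing spurious coherence or completeness failures on the subsets of $A'$ lying outside $T(\cP(A))$. This is precisely what Lemma~\ref{lema-reduction} guarantees: its equality $\nu'(\cG')\cup T(\cP(A))=\tau'(\cG')$ fixes the position of every $Z'\in\cP(A')\bac T(\cP(A))$ with respect to the $\seb'$-duality induced by $\cG'$, so those ``extra'' subsets cannot disturb the transfer of duality between $(\mu(\cW),\mu(\cW))$ and $(\cH_0,\cH_0)$. Once that lemma is in hand, the whole argument reduces to the bookkeeping sketched above.
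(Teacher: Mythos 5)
Your proof is correct and follows essentially the same route as the paper, which obtains the theorem directly from Lemma~\ref{lem6} applied with $\cH=\cK=\mu(\cW)$ together with the observation that $|\cG'|\leq|A|$ keeps the reduction polynomial. You additionally spell out the extraction of $\mu'(\cW')$ from $\{T(X);X\in\cH\}\cup\cG'$ and the coNP-membership of regular strongness needed to upgrade hardness to completeness --- details the paper leaves implicit but which you handle correctly.
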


And remembering that $T(\cP(A))\cap\nu'(\cG')=\emptyset$, we immediately get the following:

\begin{corollary}\label{cor2}
 Given $\cH\seb\cP(A)$, let $\cK:=\lambda(\cH)$, $\cH':=\{T(X);X\in\cH\}$ and $\cK':=\{T(Y);Y\in\cK\}$.\\
 Then $\cK'\seb\lambda'(\cH'\cup\cG')=\mu'(\cK'\cup\cG')$. 
 Therefore $\lambda(\cH)$ can be deduced from $\lambda'(\cH'\cup\cG')$.
\end{corollary}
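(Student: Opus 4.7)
The plan is to derive the equality $\lambda'(\cH'\cup\cG')=\mu'(\cK'\cup\cG')$ from Lemma~\ref{lem6}, and then to check the containment $\cK'\seb\lambda'(\cH'\cup\cG')$ by a direct minimality argument. The ``therefore'' clause is then automatic, since the remark preceding the corollary gives $T(\cP(A))\cap\nu'(\cG')=\emptyset$, whence $\cK'=\lambda'(\cH'\cup\cG')\bac\cG'$ and $\cK$ is recovered by applying $T^{-1}$.

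For the equality, I first observe that $(\cH,\cK)$ is a dual pair in the sense of the definition preceding Lemma~\ref{lema2}: by Lemma~\ref{lema1},
$\tau(\cK)=\tau(\mu(\tau(\cH)))=\tau(\tau(\cH))=\nu(\cH)$. Lemma~\ref{lem6} then transfers duality to the pair $(\cH'\cup\cG',\cK'\cup\cG')$, i.e., $\nu'(\cH'\cup\cG')=\tau'(\cK'\cup\cG')$. The shift-order analogue of Lemma~\ref{lema1} (whose proof carries over verbatim, as already noted in the paper just before Lemma~\ref{lema-reduction}) turns this into $\tau'(\cH'\cup\cG')=\nu'(\cK'\cup\cG')$; applying $\mu'$ and using the elementary identity $\mu'(\nu'(\cdot))=\mu'(\cdot)$ yields $\lambda'(\cH'\cup\cG')=\mu'(\cK'\cup\cG')$.

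For the inclusion $\cK'\seb\mu'(\cK'\cup\cG')$, I would take $Y\in\cK$ and check that $T(Y)$ is $\seb'$-minimal in $\cK'\cup\cG'$. If $Z'\seb' T(Y)$ with $Z'\in\cK'$, the isomorphism property of $T$ on $T(\cP(A))$ gives $Z'=T(Y')$ with $Y'\seb Y$, both lying in the antichain $\mu(\tau(\cH))$, so $Y'=Y$ and $Z'=T(Y)$. If instead $Z'=Z^a\in\cG'$, then $Z^a\seb' T(Y)$ is ruled out by the counting identity $|Z^a\cap[2a-1,2n]|=n-a+2>n-a+1=|T(Y)\cap[2a-1,2n]|$; the right-hand count uses that exactly one of $2b-1,2b$ lies in $T(Y)$ for each $b\in A$ with $b\geq a$. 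This counting step is the main obstacle; everything else is a mechanical composition of Lemma~\ref{lem6} with the shift-order version of Lemma~\ref{lema1}.
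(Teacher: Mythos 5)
Your proof is correct and follows exactly the route the paper intends: the paper states the corollary as an immediate consequence of Lemma~\ref{lem6} together with $T(\cP(A))\cap\nu'(\cG')=\emptyset$, and your write-up is a faithful unpacking of precisely those two ingredients (duality of $(\cH,\lambda(\cH))$ transferred by Lemma~\ref{lem6} and the primed Lemma~\ref{lema1}, plus the counting argument at threshold $2a-1$ that excludes $Z^a\seb' T(Y)$, which is the same computation used in the proof of Lemma~\ref{lema-reduction}). No discrepancies to report.
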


Let us now apply our Corollary \ref{cor2}. Reconsider Example \ref{ex3} at the end of Section 3, where
$I:=\{m,\ldots,1\}$, $A:=\{2m,2m-1,\ldots,2,1\}$, $\cH:=\{\{2i,2i-1\}; i\in I\}$ 
and $\cK=\{Y\seb A; \fa i\in I, 2i\in Y$ or (exclusive) $2i-1\in Y\}$.
Then $|\cG'|=2m$, so $|\cH'\cup\cG'|=3m$. Therefore, with $k:=3m$,
$|\cH'\cup\cG'|=k$ and $|\lambda'(\cH'\cup\cG')|\geq |\cK'|=c^{k}$, where $c:=2^{1/3}>1$;
so the size of the hypergraph $\lambda'(\cH'\cup\cG')$ grows exponentially in the size of $\mu'(\cH'\cup\cG')$.

\begin{example}\label{ex6}
The following table presents the case $m:=2$.

\begin{flushleft}
\begin{minipage}[b]{15ex}
\begin{tabular}{|l|l|}  \hline
$A$			& $4321$\\ \hline
$\cH$			& $1100$\\
			& $0011$\\ \hline 
$\cK$			& $1010$\\
			& $1001$\\
			& $0110$\\
			& $0101$\\ 
			& $~$   \\
			& $~$   \\
			& $~$   \\
			& $~$   \\ \hline
\end{tabular}
\end{minipage}
\hspace{1ex}
\begin{minipage}[b]{15ex}
\begin{tabular}{|l|l|}  \hline
$A'$			& $87654321$\\\hline
$\cH'$			& $10100101$\\
			& $01011010$\\ \hline 
$\cK'$			& $10011001$\\
			& $10010110$\\
			& $01101001$\\
			& $01100110$\\\hline
	$Z^4$		& $11000000$\\
	$Z^3$		& $01110000$\\
	$Z^2$		& $01011100$\\
	$Z^1$		& $01010111$\\ \hline
\end{tabular}
\end{minipage}
\end{flushleft}
\end{example}
$~$\\

This also proves that for regular games $(A,\cW)$, $|\lambda'(\cW)|$ can grow exponentially in function of $|\mu'(\cW)|$;
although $|\lambda(\cW)|$ can be bounded linearly in function of $|\mu(\cW)|$~\cite{PS85}.
This result was already proven before, using different techniques, in Corollary 4 of~\cite{KKZ13}.

Our results translate the ``bad news'' of classical duality theory~\cite{FK96} to corresponding ``bad news'' for the regular duality theory.
But our polynomial-reduction-results would also guaranty that ``good news'' for the regular cases translate to corresponding ``good news'' for the classical theory.

If we have a good algorithm to determine $\lambda'(\cH')$ from $\cH':=\mu'(\cW)$, then using that algorithm is certainly preferred to using the Hop-Skip-and-Jump algorithm of~\cite{PS85}.
Suppose that for any regular game $\Gamma:=(A,\cW)$, starting with $\cH':=\mu'(\cW)$ we could determined $\cK':=\lambda'(\cW)$ in time polynomially bounded by the {\em input plus output size} $|A|\cdot(|\cH'|+|\cK'|)$.
This does not contradict our Corollary \ref{cor2}; but according to our Lemma \ref{lem6} it would imply {\em sensational good news} for classical duality theory: It would prove that all the so much investigated decision problems, that until now are only known to be quasi-polynomial~\cite{FK96}, are in fact polynomial.

So, summing up: The here developed critical considerations do not question Theorem 1 and 2, because they refer to simple games.
But if we would restrict the considered problems to regular games, then we can not solve them polynomially any more.
We do not even know if the decisive problem for regular games $\Gamma:=(A,\cW)$ can be solved in quasi-polynomial time, 
in function of $\kappa'(\Gamma):=|A|\cdot|\mu'(\cW)|$.

\section{Conclusions}

As we have seen, strongness poses the main question: Given $\cW\seb\cP(A)$, is $\cW\cup\neg(\cW)=\cP(A)$?

If $\cW$ is given explicitly, then obviously the question can be answered in time, that can polynomially be bounded by the size $|A|\cdot |\cW|$ of the  given $\cW$.
But if $\cW$ is monotone, then $\cW$ can implicitly be given by $\cH:=\mu(\cW)$.
So, if we restrict the attention to simple games $(A,\cW)$, as we did, then the size of the problems instances is $|A|\cdot|\cH|$. 
And evaluating according to that unit of measurement, the strongness problem becomes {\em coNP}-complete; 
unless we restrict the scope even more, to proper games, for which $\cW\seb\tau(\cH)$ holds.
Then strongness can be decided in time that can be bounded by a sub-exponential function of the size of $\cH$. 
But the corresponding quasi-polynomial algorithms are definitely super-polynomial.
To guaranty polynomiallity, it seems that one has to restrict the universe to instances $\cW$ that are ``more-than-proper''. 
For example, if $A=\{n,\ldots,1\}$ and for all considered instances $\nu'(\cH)\seb\tau'(\cH)$ holds, 
then $\nu(\cH)\seb\nu'(\cH)\seb\tau'(\cH)\seb\tau(\cH)$; so strongness $\nu(\cH)\sepq\tau(\cH)$ implies regularity. 
Hence, to decide strongness, one can first examine regularity and, if the answer is positive, decide the strongness of the regular game. 
All, of course, in time polynomial in $|A|\cdot|\cH|$. 
In Section 4 we derived a similar result for weighted instances; which in fact are dually characterized by the, properness implying, 
``more than proper'' condition of Lemma \ref{lema3}.\\

However, weighted instances $\cW$ will often not be given stating explicitly $\cH:=\mu(\cW)$. Usually they will be defined implicitly, 
explicitly giving only a corresponding threshold criterion $(p,q)$. Then, measuring computational time according to the size of $(p,q)$ 
---which of course is much smaller than $|A|\cdot|\cH|$--- the strongness problem turns out to be {\em coNP}-complete; as is well known \cite{FMOS11}. 
Again we could focus on a sub-universe: Suppose that $A$ can be reordered such that $A=\{n,\ldots,1\}$ and for all $a\in A$, $p(a)=2^a$. 
Then there exists exactly one $X\in\cW$, such that $p(X):=\Sigma\{p(a); a\in A\}$ is minimal. So, to implicitly give $\cW$, 
it suffices to explicitly exhibit this minimal winning coalition $X\in\cW$.
The size of this $X$ is $|A|$: very small. Nonetheless, strongness can be decided in time polynomial in that small size: 
Simply determine the maximal loser $Y\seb A$ ---the unique one, such that $p(X)-1=p(Y)$--- and decide if $A\bac Y\in\cW$.
These considerations show that restricting our strongness problem to ``more than proper'' sub-universes of simple games, sometimes, but not always, 
really reduces the problems complexity.

In Section 5 we proved that if instead of considering simple games we restrict the attention to the regular ones, it does not reduce the problems complexity; 
because regular games can implicitly be specified by codes that are much smaller than the ones needed to specify simple games.
This turned out to be the reason why ---in Section 5--- the strongness decision problem for simple games could be polynomially reduced to its restriction to regular games.
And the same holds if from the beginning we restrict the attention to proper games. These results have, as we indicated, algorithmic implications.

Beyond that, they are also of interest for structural game theoretical considerations~\cite{Pol80,Pol08}. 
Because our last Theorem \ref{teo3} proves that one can extend the monomorphism $T$ defined in Section 5, 
to define for every simple game $\Gamma:=(A,\cW)$ a regular game $\Gamma':=T(\Gamma):=(A',T(\cW))$, 
such that the amount of information needed to specify $T(\Gamma)$ is polynomial in the amount required by $\Gamma$. In this sense we may conclude that
the class of regular, i.e. $\seb'$-monotone games, yields a structure that is {\em at least as eloquent} as the class
of simple, i.e. $\seb$-monotone games; because this classical {\em Boolean structure} can
economically be interpreted in the here reconsidered {\em regular structure}.


\bibliographystyle{model1-num-names}

\begin{thebibliography}{00}
\bibitem{Ber70}
C. Berge: {\em Graphes et hypergraphes}.
Dunod, Paris, 1970.
\bibitem{Car84}
F. Carreras: A characterization of the Shapley-Shubik index of power via automorphisms.
{\em Stochastica} 8, p. 171-179, 1984.
\bibitem{FK96}
M. L. Fredman and L. Khachiyan: On the complexity of dualization of monotone disjunctive normal forms.
{\em Journal of Algorithms}, 21, 1996.
\bibitem{FMOS11}
J. Freixas, X. Molinero, M. Olsen and M. J. Serna: On the complexity of problems on simple games.
{\em RAIRO-Operations Research}, Vol. 45, No. 4, p. 295-314, 2011.
\bibitem{Gol80}
M. Golumbic: {\em Algorithmic graph theory and perfect graphs}.
Academic Press, 1980.
\bibitem{GM97}
D. Granot and M. Maschler: The reactive bargaining set: structure, dynamics and extension to NTU games.
{\em International Journal of Game Theory}, Vol. 26, No. 1, p. 75-95, 1997.
\bibitem{Hu65}
S. Hu: {\em Threshold logic}.
Berkeley and Los Angeles: University of California Press, 1965.
\bibitem{Isb56}
J. Isbell: A class of majority games.
{\em Quarterly Journal of Mathematics. Oxford Scr.}, Vol. 7, No. 1, p. 183-187, 1956.
\bibitem{KKZ13}
B. de Keijzer, T. B. Klos and Y. Zhang: {\em Solving weighted voting game design problems optimally: Representations, synthesis, and enumeration}, 2013. In arXiv:1204.5213v3.
\bibitem{Kha79}
L. Khachiyan: A polynomial algorithm in linear programming.
{\em Soviet Mathematics Doklady}, 20, p. 191-194, 1979.
\bibitem{KS95}
I. Krohn and P. Sudh\"olter: Directed and weighted majority games.
{\em Mathematical Methods of Operations Research}, 42, p. 189-216, 1995.
\bibitem{Mak02}
K. Makino: A linear time algorithm for recognizing regular Boolean functions.
{\em Journal of Algorithms}, 43, p. 155-176, 2002.
\bibitem{McNau61}
R. McNaughton: Unate truth functions.
{\em IRE Transactions on Electronic Computers}, Vol. EC-10, No. 1, p. 1-6, 1961.
\bibitem{Mur71}
S. Muroga: {\em Threshold logic and its applications}.
Wiley Interscience, New York, 1971.
\bibitem{MTB70}
S. Muroga, T. Tsuboi and C.R. Baugh: Enumeration of threshold functions of eight variables.
Department of Computer cience, University of Illinois, Report No. 245, 1967, 59 pages.
Excerpts appear in {\em IEEE Transactions on Computers}, Vol. C-19, No. 9, p. 818-825, 1970.
\bibitem{vNM44}
J. von Neumann and O. Morgenstern: {\em Theory of games and economic behavior}.
New Jersey: Princeton University Press, 1944.
\bibitem{Ost87}
A. Ostmann: {\em Life-length of a process with elements of decreasing importance.}
Working Paper 156 Inst of Math Ec University of Bielefeld, 1987.
\bibitem{PS85}
U.N. Peled and B. Simeone: Polynomial-time algorithms for regular set-covering and threshold synthesis.
{\em Discrete applied mathematics}, 12, p. 57-69, 1985.
\bibitem{PS94}
U.N. Peled and B. Simeone: An $O(nm)$-time algorithm for computing the dual of a regular Boolean function.
{\em Discrete applied mathematics}, 49, p. 309-323, 1994.
\bibitem{Pol80}
A. Polym\'eris: Conjuncturally stable coalition structures.
{\em Annals of Discrete Mathematics}, Vol. 9, 1980.
\bibitem{Pol08}
A. Polym\'eris: Stability of two player game structures.
{\em Discrete applied mathematics}, 156, p. 2636-2646, 2008.
\bibitem{PCPO02}
A. Polym\'eris, R. Contreras, M.A. Pinninghoff and E. Osses: Response-ability and its complexity.
{\em In Proceedings of the International Conference on Artificial Intelligence IC-AI'02, volume III. CSREA Press,
Las Vegas, USA}, 2002.
\bibitem{RRST85}
J. Reiterman, V. Rodl, E. Sinajova and M. Tuma: Threshold hypergraphs.
{\em Discrete Mathematics}, Vol. 54, No. 2, p. 193-200, 1985.
\bibitem{Ric56}
M. Richardson: On finite projective games.
{\em Proceedings of the American Mathematical Society}, 7, p. 458-465, 1956.
\bibitem{RP11}
F. Riquelme and A. Polym\'eris: On the complexity of the decisive problem in simple and weighted games.
{\em Electronic Notes in Discrete Mathematics}, 37, p. 21-26, 2011.
\bibitem{Sud96}
P. Sudh\"olter: Star-shapedness of the kernel for homogeneous games and some applications to weighted majority games.
{\em Mathematical Social Sciences}, 32, p. 179-214, 1996.
\bibitem{Sud96b}
P. Sudh\"olter: The modified nucleolus as canonical representation of weighted majority games.
{\em Mathematics of Operations Research}, Vol. 21, No. 3, p. 734-55, 1996.
\bibitem{TZ92}
A. Taylor and W. S. Zwicker: A characterization of weighted voting.
{\em Proceedings of the American Mathematical Society}, 115, p. 1089-1094, 1992.
\bibitem{TZ99}
A. Taylor and W. S. Zwicker: {\em Simple games: Desirability relations, trading, pseudoweightings}.
1st edition, Princeton University Press, New Jersey, 1999.
\bibitem{Win62}
R. O. Winder: {\em Threshold logic}.
Ph.D. dissertation, Department of Mathematics, Princeton University, 1962.
\end{thebibliography}

\end{document}